\newtheorem{theorem}{Theorem}[section]
\newtheorem{corollary}{Corollary}
\newtheorem{proposition}{Proposition}
\theoremstyle{definition}
\newtheorem{definition}[theorem]{Definition}
\begin{document}

\centerline{\bf \Large Superposition rules and second-order Riccati equations}
      
\medskip

\centerline{\scshape J.F. Cari\~nena$^*$ and J. de Lucas$^{**}$}
\medskip
 \centerline{$^*$Departamento de F\'isica Te\'orica and IUMA, Facultad de Ciencias, Universidad de Zaragoza,}
 \centerline{Pedro Cerbuna 12, 50.009, Zaragoza, Spain.}

\medskip
 \centerline{$^{**}$Institute of Mathematics, Polish Academy of Sciences,}
      \centerline{\'Sniadeckich 8, P.O. Box 21, 00-956, Warszawa, Poland.}
\medskip

\bigskip

% The name of the associate editor will be entered by an editorial staff

%The abstract of your paper
\begin{abstract}
A {\it superposition rule} is a particular type of map that enables one to express the general solution of certain systems of first-order ordinary differential equations, the so-called {\it Lie systems}, out of generic families of particular solutions and a set of constants. The first aim of this work is to propose several generalisations of this notion to second-order differential equations. Next, several  results on the existence of such generalisations are 
given  and relations with the theories of Lie systems and quasi-Lie schemes are found. Finally, our methods are used  to study second-order Riccati equations and other second-order differential equations of mathematical and physical interest. 
\end{abstract}

%The title of your section 1

{{\bf Primary:} {\it 34A26; Secondary: 34A34,53Z05.}}
 
{{\bf Keywords:} {\it Quasi-Lie scheme, Lie system, second-order system, second-order Riccati equation, superposition rule.}}

\section{Introduction}
The origin of the study of superposition rules can be traced back to 1893, when Lie, Guldberg, and Vessiot published a series of papers \cite{Gu93,LS,Ve93,Ve94} characterising and analysing systems of first-order ordinary differential equations admitting a superposition rule. This theory, broadly analysed during its beginning, was rarely treated for the next eighty years. Nevertheless, the interest in this topic has revived in the last few decades, and many works have recently been devoted to the analysis of its properties and to the study of its applications and generalisations \cite{BecGagHusWin90,CGL08,CGM07,CLR08,CLR08c,CarRam03,GGL08,Ib99,LW96,LP09,RSW97,Ve95,PW}. Among these works, we can 
emphasise the results described in \cite{CGM07,Ib99,PW}.

In spite of its interesting properties and applications in Mathematics, Control Theory, and Physics (see \cite{CGM07,CL08Non,CLR08,CLR08c,CarRam03,GGL08,OlmRodWin8687,RSW97}), the theory of Lie systems possesses a certain lack of applicability in many other problems described by differential equations that are not Lie systems. This motivates the generalisation of the methods for analysing Lie systems carried out here, and in previous works, so as to investigate non-Lie systems.

The theories of Lie systems \cite{CGM07,LS,PW} and quasi-Lie schemes \cite{CGL08,CLL08Emd} mainly concern the investigation of systems of first-order differential equations. Nevertheless, various attempts have been carried out to apply their methods to study second-order differential equations (SODEs) \cite{CL08Non,CLR08,CLR08c,GGL08,RSW97,Ve95}. These papers were rather more focused on applying the methods of the aforementioned theories to SODEs than on describing new theoretical results. This explains why, although these applications suggested the existence of a new type of (nonlinear) expressions describing general solutions of SODEs, no further discussion about such expressions and their properties was carried out.

In view of the above comments, the first aim of this work is to define some new notions describing, as particular cases, those expressions for the study of SODEs appearing   
in the very recent literature, namely, time-dependent and time-independent superposition rules for systems of SODEs. Furthermore, some theoretical results concerning the existence of such superposition rules are demonstrated, and several relations with the theories of Lie systems and quasi-Lie schemes are shown. In this way, our achievements allow us to clarify diverse procedures and techniques used in previous works to investigate SODEs.

After developing the theoretical part of our work, our results are applied to study families of SODEs appearing  broadly in the Physics and Mathematics literature \cite{Ar97,CGR09,CRS05,CLS05II,Da62,GGL08, In86,KL09,BLPS09}. More specifically, time-independent and time-dependent superposition rules are derived for such families using the theory of Lie systems and quasi-Lie schemes. This reduces the determination of the general solution of any instance of that families to the determination of four of its particular solutions. On one hand, this provides a modern fully geometric demonstration for a result sketched in a Vessiot's work \cite{Ve95} that was derived in {\it ad hoc} way. On the other hand, our work supplies a method to analyse the solutions of specific forms of time-dependent Li\'enard equations \cite{CLS05II,GGL08,BLPS09}, various interesting equations studied by Ince and Davis \cite{Da62,In86}, some modified Emden equations \cite{BLPS09}, certain second-order Riccati equations \cite{Ar97,CGR09,CRS05}, etc. Moreover, particular forms of the analysed SODEs arise in the analysis of fusion pellets \cite{AAE84}.

Among the aforementioned applications, particular attention is paid to the study of second-order Riccati equations. 
The interest is mainly due to two reasons. On one hand, these equations are widely studied in order to analyse their mathematical properties \cite{Ar97,CRS05,CGK08,Er77,FL66I,GL99,Da05}. On the other hand, second-order Riccati equations and some of their particular cases, e.g. certain modified Emden equations \cite{CLS05II,CLS05,BLPS09} or various Painlev\'e-Ince equations \cite{BFL91,CRS05,AAE84,Go50,KL09}, appear in many physical problems, like the problem of waves on shallow water. 

Another application of second-order Riccati equations deserves a special attention: these equations are an element of the 
so-called {\it Riccati chain} describing the B\"acklund transformations for various partial differential equations \cite{CGR09,GL99}. More specifically, for each one of such PDEs, a particular solution defines a one-parametric family of $n$-order Riccati equations. The solutions of the
  members of such a family allow one to build up a new family of solutions of the initial PDE \cite{GL99}. Obviously, 
  the time-dependent superposition rule derived in this work for second-order Riccati equations simplifies the study of the solutions
   of such a family and, in consequence, the determination of new solutions for those PDEs whose B\"acklund transformations are determined by these equations.

The plan of the paper is as follows. Section 2 mainly recalls the notions of the theory of Lie and quasi-Lie systems necessary to
follow the results of our paper. Additionally, the usefulness of quasi-Lie schemes to explicitly determine time-dependent superposition rules for systems of first-order differential equations is discussed. Section 3 is concerned with one of the novelties of our paper: the definition and analysis of the time-independent and the time-dependent superposition rule notions for systems of SODEs. In Section 4 our previous theoretical results are employed to derive a superposition rule for all the elements of a family of SODEs with many applications to Physics and Mathematics. A quasi-Lie scheme is used in Section 5  to
derive a time-dependent superposition rule for second-order Riccati equations. Finally, some calculations to prove various results 
of the paper are detailed in the Appendix.

\section{Lie systems and quasi-Lie schemes}\label{LSLS}

Here we mainly review some features about Lie systems, quasi-Lie schemes and time-dependent superposition rules for systems of first-order differential equations \cite{CGL08,CGL10,CGM07,CLL08Emd}. In addition, the use of quasi-Lie schemes for deriving time-dependent superposition rule is discussed. For simplicity, we restrict ourselves to analysing systems of differential equations on linear spaces and we skip various technical details that are not relevant to understand our procedures.

Lie initiated in \cite{LS} the nowadays called theory of Lie systems when started investigating the conditions that ensure that a system of first-order ordinary
differential equations of the form
\begin{equation}\label{FORD}
\frac{dx^i}{dt}=X^i(t,x),\qquad i=1,\ldots,n,
\end{equation}
admits a {\it superposition rule}, i.e. a $t$-independent function $\Phi:\mathbb{R}^{nm}\times\mathbb{R}^n\rightarrow
{\mathbb{R}}^n$, $x=\Phi(x_{1}, \ldots,x_{m};\lambda_1,\ldots,\lambda_n)$,
such that the general solution of the system can be written as
\begin{equation}\label{FirstSup}
x(t)=\Phi(x_{1}(t), \ldots,x_{m}(t);\lambda_1,\ldots,\lambda_n), 
\end{equation}
where
$\{x_{a}(t)\mid a=1,\ldots,m\}$ is any generic family of
particular solutions and $\lambda_1,\ldots,\lambda_n$ is a set of   $n$
constants. 

Note that the general solution of a linear homogeneous system of first-order differential equations in $\mathbb{R}$ cannot be cast into the form (\ref{FirstSup}) for every set of $n$ particular solutions: they must be linearly independent. In a similar way, for all known Lie systems (cf. \cite{LW96,PW}), their expressions (\ref{FirstSup}) only hold for certain families of particular solutions. More specifically, it is said that expression (\ref{FirstSup}) is valid for any `generic' family of $m$ particular solutions if there exists an open dense subset $U\subset\mathbb{R}^{nm}$ such that expression (\ref{FirstSup}) is satisfied for every set of particular solutions $x_{1}(t),\ldots,x_{m}(t)$ such that $(x_{1}(0),\ldots,x_{m}(0))$ lies in $U$. Obviously, `almost every' set of $m$ particular solutions holds the above property and this is, indeed, the approximate meaning of the term `generic' in the above definition. 
 
Lie found a characterisation describing systems of first-order differential equations admitting a superposition rule \cite{LS} and such a characterisation was recently reformulated in terms of the modern language of Differential Geometry \cite{CGM07}. In these terms, each system of the form (\ref{FORD}) is described by means of a time-dependent vector field $X(t,x)=\sum_{i=1}^nX^i(t,x)\partial/\partial x^i$ on $\mathbb{R}^n$. This description is used to establish that a first-order system (\ref{FORD}) admits a superposition rule if and only if its associated time-dependent
vector field $X(t,x)$ can be written as a linear combination
\begin{equation}
X(t,x)=\sum_{\alpha =1}^r b_\alpha(t)\, X_{\alpha}(x),
\label{Lievf}
\end{equation}
where the vector fields $$X_{\alpha}(x)=\sum_{i=1}^nX^i_{\alpha}(x)\frac{\partial}{\partial x^i},\qquad\qquad \alpha=1,\ldots,r,$$
are a basis for an $r$-dimensional real  Lie algebra $V$ of vector fields, the so-called associated {\it Vessiot--Guldberg Lie algebra}. In other words, a time-dependent vector field $X(t,x)$ describes a Lie system if and only if the family of vector fields $\{X_t\}_{t\in\mathbb{R}}$, with $X_t:x\in\mathbb{R}^n\rightarrow X_t(x)\equiv X(t,x)\in {\rm T}\mathbb{R}^n$, holds $\{X_t\}_{t\in\mathbb{R}}\subset V$ for a certain finite-dimension Lie algebra of vector fields $V$. 

Various procedures are available to derive a superposition rule. Let us sketch here one of these methods to be used within this work (for a full description, see \cite{CGM07}). The key element of this procedure is the so-called {\it diagonal prolongation} of a vector field. Given a vector field $X(x)=\sum_{i=1}^nX^i(x)\partial/\partial x^i$ over $\mathbb{R}^n$, its diagonal prolongation to $(\mathbb{R}^n)^{p+1}$ is the vector field over this space
\[
\widehat X(x_{0}, \ldots,x_{m})=\sum_{a=0}^p\sum_{i=1}^nX^i(x_{a})\,\frac{\partial}{\partial
x^i_{a}}.
\]
Now, our approach to derive a superposition rule for system (\ref{FORD}) starts by determining a decomposition (\ref{Lievf}) for its associated time-dependent vector field. Next, it must be determined the natural number $m$ so that the {\it diagonal prolongations} to $(\mathbb{R}^n)^m$ of the vector fields $X_1,\ldots,X_r$ become linearly independent at each point of an open dense subset of this space. Note that the diagonal prolongations $\widehat X_1,\ldots,\widehat X_r$ to $\mathbb{R}^{n(m+1)}$ of the vector fields $X_1,\ldots,X_r$ satisfy their same commutation relations and they are linearly independent over an open dense set $U\subset \mathbb{R}^{n(m+1)}$ again. Consequently, they span an involutive distribution 
$$\mathcal{D}_{\bar x}=\langle \widehat X_1(\bar x),\ldots,\widehat X_r(\bar x)\rangle,\qquad \bar x\in U,$$
of rank $r$ over $U$. The vector fields of this distribution admit $n(m+1)-r$ common local first-integrals. Among them, one can choose $n$ first-integrals giving rise to a $n$-codimensional local foliation $\mathcal{F}$ horizontal with respect to the projection $\pi:(x_0,\ldots,x_m)\in(\mathbb{R}^{n})^{m+1}\mapsto (x_1,\ldots,x_m)\in(\mathbb{R}^{n})^m$. Roughly speaking, the former first-integrals enable us to express the coordinates $x_0^i$, with $i=1,\ldots,n,$ in terms of the other variables and the $n$ first-integrals, giving rise to the superposition rule for system (\ref{FORD}).

Since it would be interesting to find a way
to apply the methods for analysing Lie systems to a broader set of differential equations, the theory of quasi-Lie schemes, whose basic features are described below, was developed \cite{CGL08}. Although such a theory applies to systems of first-order differential equations associated with complete and non-complete time-dependent vector fields, the forthcoming presentation will focus on studying systems associated with complete vector fields in order to simplify the presentation and underline the main results of the theory. Nevertheless, a careful analysis shows that our main claims remain valid for systems associated with non-complete vector fields, although with some technical minor modifications.

Let us now turn to define the key notion of the theory of quasi-Lie schemes.

\begin{definition}
A {\it quasi-Lie scheme} $S(W,V)$ is made up by two finite-dimensional vector spaces of vector fields $W,V$ satisfying the following conditions:
\begin{itemize}
 \item $W$ is a linear subspace of $V$.
\item $W$ is a Lie algebra of vector fields, that is, $[W,W]\subset W$.
\item $W$ normalises $V$, i.e. $[W,V]\subset V$.
\end{itemize}
\end{definition}

Roughly speaking, each quasi-Lie scheme $S(W,V)$ determines two families of time-dependent vector fields, $V(\mathbb{R})$ and $W(\mathbb{R})$, which are made of time-dependent vector fields taking values in $V$ and $W$, respectively. The first family, $V(\mathbb{R})$, is intended to describe the first-order systems which can be analysed by means of $S(W,V)$. The second one, $W(\mathbb{R})$ is used to define a group of time-dependent changes of variables: the {\it group of the scheme} $\mathcal{G}(W)$. Now, each system related to a time-dependent vector field of $V(\mathbb{R})$ transforms, under any element of $\mathcal{G}(W)$, into another system related to another element of $V(\mathbb{R})$. This is the most important property of quasi-Lie schemes \cite[Proposition 1]{CGL08}. It has been widely used in recent works \cite{CGL08,CLL08Emd,CLR10Abel} to transform diverse types of differential equations, e.g. Abel equations \cite{CLR10Abel}, into other differential equations of the same type and, more specifically, into Lie systems. In this way, the theory of Lie systems applies to investigate this last system and, by undoing the performed change of variables, multiple properties of the original system can be stated. 
  
Let us explain the above claims more carefully. Every time-dependent vector field $X$ over $\mathbb{R}^n$ gives rise to a {\it generalised flow} $g^X$, i.e. a map $g^X:(t,x)\in\mathbb{R}\times \mathbb{R}^n \mapsto  g^X_t(x)\equiv g^X(t,x)\in \mathbb{R}^n$, with $g^X_0={\rm Id}_{\mathbb{R}^n}$, satisfying that $\gamma^X_{x_0}(t)=g^X_t(x_0)$ is the integral curve of the time-dependent vector field $X$ passing through the point $x_0\in\mathbb{R}^n$ at $t=0$. Now, since $W$ is a Lie algebra of vector fields, it can be proved that the generalised flows corresponding to time-dependent vector fields with values in $W$ form a group, the so-called group of the scheme $\mathcal{G}(W)$, with composition law $(g\star h)_t=g_t\circ h_t$ and $g,h\in\mathcal{G}(W)$. The neutral element is $e:(t,x)\in\mathbb{R}\times\mathbb{R}^n\mapsto x\in\mathbb{R}^n$ and every generalised flow $g$ admits an inverse $g^{-1}:(t,x)\in\mathbb{R}\times\mathbb{R}^n\mapsto (g_t)^{-1}(x)\in\mathbb{R}^n$. As every generalised flow can be considered as a time-dependent change of variables, the group $\mathcal{G}(W)$ can be regarded as a group of time-dependent changes of variables.

Generalised flows can also act on time-dependent vector fields. Given a time-dependent vector field $Y$ and a generalised flow $h$, the action of $h$ over $X$,  let us say $h_\bigstar X$, is the time-dependent vector field  whose generalised flow is $h\star g^Y$. In this terminology, the main result of the theory of quasi-Lie schemes \cite[Proposition 1]{CGL08} establishes that for every time-dependent vector field $Y$ of $V(\mathbb{R})$  and $g\in\mathcal{G}(W)$, the time-dependent vector field $g_\bigstar Y$ belongs to  $V(\mathbb{R})$. In other words, the family of systems associated with the elements of $V(\mathbb{R})$ is stable under the time-dependent changes of variables of $\mathcal{G}(W)$. Among such systems, special relevance have those ones, the so-called {\it Lie systems}, that can be transformed into Lie systems. The precise definition of this notion is detailed below.

\begin{definition}
A system of differential equations describing the integral curves of a time-dependent vector field $X$ is a {\it quasi-Lie system} with respect to a quasi-Lie scheme $S(W,V)$,  if there exist a $g\in\mathcal{G}(W)$ and a Lie algebra $V_0\subset V$ such that $g_\bigstar X$ is a time-dependent vector field taking values in $V_0$.
\end{definition}

Quasi-Lie systems admit certain properties as a consequence of its relation to Lie systems. For instance, each quasi-Lie system admits its general solution to be described in terms of each generic family of particular solutions, a set of constants, and the time, i.e. it admits a {\it time-dependent superposition rule} \cite{CGL08}. Moreover, the theory of quasi-Lie systems provides powerful methods to explicitly determine such superpositions \cite{CGL08,CLL08Emd,CLR10Abel}. In order to understand the relevance of these facts, it is necessary to pay attention to the following remarks.

Every system of first-order differential equations admits time-dependent superposition rules \cite{CGL08,CGL10}. For instance, every first-order system describes the integral curves of a time-dependent vector field admitting a generalised flow which is indeed a particular type of time-dependent superposition rule for the system. In spite of this, determining such time-dependent superposition rules can be, like in the case of the aforementioned example, as difficult as solving the initial system \cite{CGL08, CGL10}. Thus, what it really matters about time-dependent superposition rules is the description of procedures to determine them explicitly as, for instance, quasi-Lie schemes.

Apart from the above remark, there is another reason to use quasi-Lie schemes to determine time-dependent superposition rules: every quasi-Lie scheme provides a family of first-order systems admitting the same time-dependent superposition rule as a bonus \cite[Proposition 14]{CGL10}. Let us briefly analyse this fact. Given a quasi-Lie system with respect to a quasi-Lie scheme $S(W,V)$, there exists an element $g\in \mathcal{G}(W)$ and a Lie algebra $V_0\subset V$ such that the time-dependent vector field $X$ associated with the quasi-Lie system holds that $g_\bigstar X$ takes values in $V_0$. Therefore, the set $S_g(W,V;V_0)$ of quasi-Lie systems with respect to $S(W,V)$ satisfying that $g_\bigstar X$ takes values in $V_0$ is not empty. Moreover, it is made of   time-dependent vector fields of the form $X'=(g^{-1})_\bigstar Y$, with $Y$ being any time-dependent vector field taking values in $V_0$. As all the elements of $S_g(W,V;V_0)$ admit a common time-dependent superposition rule (cf. \cite[Proposition 14]{CGL10}), we have determined a family of systems admitting the same time-dependent superposition rule than $X$.

\section{Superposition rules for systems of SODEs}\label{SRSO}
 
Previous studies about second-order differential equations carried out by means of the theory of Lie systems and quasi-Lie schemes 
were lacking in a theoretical explanation of the methods and notions employed \cite{CGL08,CL08Non,CLR08,RSW97,Ve95}. 
The main aim of this section is to provide a definition for the new notions appearing, with no further explanation, within previous works along with a theoretical explanation of the methods there performed. In this way, a basic background for the posterior theoretical treatment of the subject is laid down.

Recall that the theory of Lie systems was initiated as a result of the study of systems of first-order differential equations admitting their general solutions to be expressed in terms of each generic family of particular solutions and a set of constants. Nevertheless, not only such systems admit their general solutions to be described in this way. For instance, every second-order differential equation of the form $\ddot x=a(t)x$, with $a(t)$ any time-dependent function, satisfies that its general solution, $x(t),$ can be put into the form
\begin{equation}\label{LinearSup}
x(t)=\lambda_1x_{1}(t)+\lambda_{2}x_{2}(t),
\end{equation}
 with $\lambda_1,\lambda_{2}$ being two real constants and $x_{1}(t),x_{2}(t)$ being any family of two particular solutions such that 
 $(x_{1}(t),\dot x_{1}(t))$ and $(x_{2}(t),\dot x_{2}(t))$ are, for every $t\in\mathbb{R}$, two linearly independent elements of ${\rm T}\mathbb{R}\approx \mathbb{R}^2$. In a similar way, other expressions determining the general solution of certain SODEs have recently been described in the literature \cite{CL08Non,CLR08c}. This suggests us to propose the following definition which covers, as particular cases, all the previous expressions occurring in the literature.

\begin{definition}\label{SupRulSec} We say that a system of second-order differential equations 
\begin{equation}\label{SODE}
\ddot x^i=F^i(t,x,\dot x), \qquad \,\, i=1,\ldots,n, 
\end{equation}
on $\mathbb{R}^n$ admits a {\it superposition rule} if there exists a map $\Psi:{\rm T}\mathbb{R}^{mn}\times\mathbb{R}^{2n}\rightarrow \mathbb{R}^n$ such that its general solution, $x(t)$, can be written as
\begin{equation}\label{super}
x(t)=\Psi(x_{1}(t),\ldots,x_{m}(t),\dot x_{1}(t),\ldots,\dot x_{m}(t);\lambda_1,\ldots,\lambda_{2n}),
\end{equation}
in terms of each generic family, $x_{1}(t),\ldots,x_{m}(t),$ of particular solutions, their derivatives, and a set of $2n$ constants.
\end{definition}

In order to grasp the previous definition, it is necessary to precisely establish the meaning of `generic' in the above statement. Formally, it is said that expression (\ref{super}) is valid for a generic family of particular solutions when it holds for every family of particular solutions, $x_{1}(t),\ldots,x_{m}(t),$ satisfying that $(x_{1}(0),\dot x_{1}(0),\ldots,x_{m}(0),\dot x_{m}(0))\in U$, with $U$ being an open dense subset of ${\rm T}\mathbb{R}^{nm}$. In this way, as in the case of superposition rules for Lie systems, the term `generic' amounts to `almost every'.

In order to note that all those aforementioned expressions studying the general solutions for certain SODEs are superposition rules in the above sense, it is necessary to explain an important detail. Some of such expressions, like the one for studying Milne--Pinney equations, depend on a generic set of $m$ particular solutions, a set of constants and a set of time-independent constants of the motion \cite{CL08Non}. These expressions seem to differ from the above definition. Nevertheless, if we take into account that such constants of the motion depend on the $m$ particular solutions, we can notice that, indeed, they are superposition rules in the above sense.

Although there exists no characterisation for systems of SODEs of the form (\ref{SODE}) admitting a superposition rule, there exists a special class of such systems, the so-called {\it SODE Lie systems}, accepting such a property. Even though this fact has been employed implicitly in the literature, it has never been proved explicitly. In view of these facts, we next furnish the definition of the SODE Lie system notion along with a proof showing that every SODE Lie system admits a superposition rule. In addition, some remarks about the interest of this notion and its main properties are discussed.

\begin{definition}\label{DefSODE} We say that the system of  SODEs (\ref{SODE})
is a {\it SODE Lie system} if the system of first-order differential equations
\begin{equation}\label{FOrder}
\left\{
\begin{aligned}
\dot x^i&=v^i,\\
\dot v^i&=F^i(t,x,v),
\end{aligned}\right.\qquad i=1,\ldots,n,
\end{equation}
obtained by adding the new variables $v^i=\dot x^i$, with $i=1,\ldots,n$, to system (\ref{SODE}), is a Lie system.
\end{definition}

\begin{proposition}\label{SR}  Every SODE Lie system (\ref{SODE}) admits a superposition rule $\Psi:{\rm T}\mathbb{R}^{nm}\times\mathbb{R}^{2n}\rightarrow\mathbb{R}^n$ of the form $\Psi=\pi\circ\Phi$, where $\Phi:{\rm T}\mathbb{R}^{nm}\times\mathbb{R}^{2n}\rightarrow {\rm T}\mathbb{R}^n$ is a superposition rule for the system (\ref{FORD}) and $\pi:{\rm T}\mathbb{R}^n\rightarrow\mathbb{R}^n$ is the projection associated with the tangent bundle ${\rm T}\mathbb{R}^n$. 
\end{proposition}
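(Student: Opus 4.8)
The plan is to reduce the whole statement to the already-available first-order theory. By Definition~\ref{DefSODE}, the hypothesis that \eqref{SODE} is a SODE Lie system means exactly that the associated first-order system \eqref{FOrder}, regarded on $\mathrm{T}\mathbb{R}^n\approx\mathbb{R}^{2n}$ with coordinates $(x^i,v^i)$, is a Lie system. Hence, by the characterisation of Lie systems recalled in Section~\ref{LSLS}, the system \eqref{FOrder} admits a superposition rule. Since its phase space is $2n$-dimensional, such a rule involves $2n$ constants and an $m$-fold product of copies of $\mathrm{T}\mathbb{R}^n\approx\mathbb{R}^{2n}$; writing that product as $\mathrm{T}\mathbb{R}^{nm}$ under the natural identification, the superposition rule is a map $\Phi:\mathrm{T}\mathbb{R}^{nm}\times\mathbb{R}^{2n}\to\mathrm{T}\mathbb{R}^n$. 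This is precisely the $\Phi$ appearing in the statement, and the matching of $m$ and of the $2n$ constants with Definition~\ref{SupRulSec} is automatic.

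Next I would make explicit the one-to-one correspondence between solutions of \eqref{SODE} and of \eqref{FOrder}. On the one hand, if $x(t)$ solves \eqref{SODE}, then $t\mapsto(x(t),\dot x(t))$ solves \eqref{FOrder}. On the other hand, if $t\mapsto(x(t),v(t))$ solves \eqref{FOrder}, the first block of equations forces $v(t)=\dot x(t)$, and then the second block shows that $x(t)$ solves \eqref{SODE}. In particular, each particular solution $x_a(t)$ of \eqref{SODE} yields the particular solution $(x_a(t),\dot x_a(t))$ of \eqref{FOrder}, with initial datum $(x_a(0),\dot x_a(0))$.

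With this in hand I would set $\Psi:=\pi\circ\Phi$, where $\pi:\mathrm{T}\mathbb{R}^n\to\mathbb{R}^n$ is the bundle projection $\pi(x,v)=x$. Using $\Phi$ to write the general solution of \eqref{FOrder} through the particular solutions $(x_a(t),\dot x_a(t))$,
\[
(x(t),\dot x(t))=\Phi\big(x_1(t),\dots,x_m(t),\dot x_1(t),\dots,\dot x_m(t);\lambda_1,\dots,\lambda_{2n}\big),
\]
and applying $\pi$ to both sides, the left-hand side reduces to $x(t)$ while the right-hand side becomes $\Psi\big(x_1(t),\dots,x_m(t),\dot x_1(t),\dots,\dot x_m(t);\lambda_1,\dots,\lambda_{2n}\big)$. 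This is exactly the form \eqref{super} demanded in Definition~\ref{SupRulSec}, so $\Psi$ is a superposition rule for \eqref{SODE}.

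The one point that requires genuine care, and which I regard as the crux, is the transfer of the genericity condition. The rule $\Phi$ holds for families whose initial data $(x_1(0),\dot x_1(0),\dots,x_m(0),\dot x_m(0))$ lie in some open dense subset $U$ of $\mathbb{R}^{2nm}$. Under the identification $\mathrm{T}\mathbb{R}^{nm}\approx\mathbb{R}^{2nm}$ sending $(x_a,\dot x_a)$ to $(x_a,v_a)$, the initial data of a family of solutions of \eqref{SODE} coincide with the initial data of the corresponding family of solutions of \eqref{FOrder}, so $U$ is at the same time an open dense subset of $\mathrm{T}\mathbb{R}^{nm}$. Consequently a family of particular solutions of \eqref{SODE} is generic in the sense of Definition~\ref{SupRulSec} exactly when the associated first-order family is generic, and \eqref{super} is valid on the very same set. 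Beyond this bookkeeping the argument has no obstacle: the result is essentially the translation of the first-order superposition rule through $\pi$ and the position--velocity correspondence.
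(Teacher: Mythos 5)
Your proposal is correct and follows essentially the same route as the paper's own proof: invoke Lie's characterisation to get a superposition rule $\Phi$ for the first-order system \eqref{FOrder}, use the one-to-one correspondence $x(t)\leftrightarrow(x(t),\dot x(t))$ between solutions, and compose with the tangent-bundle projection to obtain $\Psi=\pi\circ\Phi$. Your explicit treatment of how the genericity set $U$ transfers under the identification ${\rm T}\mathbb{R}^{nm}\approx\mathbb{R}^{2nm}$ is a point the paper leaves implicit, but it is a refinement of the same argument rather than a different one.
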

\begin{proof}
Each SODE Lie system of the form (\ref{SODE}) is associated with a first-order system of differential equations (\ref{FOrder}) admitting a superposition rule $\Phi:{\rm T}\mathbb{R}^{nm}\times \mathbb{R}^{2n}\rightarrow {\rm T}\mathbb{R}^n$. This allows us to describe the general solution $(x(t),v(t))$ of system (\ref{FOrder}) in terms of each generic set $(x_{a}(t),v_{a}(t))$, with $a=1,\ldots,m$, of its particular solutions and a set of $2n$ constants, i.e.
\begin{equation}\label{SupRel2}
\begin{aligned}
(x(t), v(t))&=\Phi\left(x_{1}(t),\ldots, x_{m}(t),v_{1}(t),\ldots, v_{m}(t);\lambda_1,\ldots,\lambda_{2n}\right)\\
\end{aligned}.
\end{equation}
Obviously, each solution, $x_p(t)$, of the second-order system (\ref{SODE}) corresponds to one and only one solution $(x_p(t),v_p(t))$ of (\ref{FORD}) and viceversa. Since $(x_p(t),v_p(t))=(x_p(t),\dot x_p(t))$, it follows that the general solution $x(t)$ of (\ref{SODE}) can be put in the form
\begin{equation}\label{SupRel4}
x(t)=\pi\circ\Phi\left(x_{1}(t),\ldots, x_{m}(t),\dot x_{1}(t),\ldots, \dot x_{m}(t);\lambda_1,\ldots,\lambda_{2n}\right),
\end{equation}
where $x_{a}(t)$, with $a=1,\ldots,n$, is a generic family of particular solutions of (\ref{SODE}). In other words, the map $\Psi=\pi\circ\Phi$ is a superposition rule for the system of second-order differential equations (\ref{SODE}).
\end{proof}

Since every autonomous system is related to a one-dimensional Vessiot--Guldberg Lie algebra \cite{CGL08}, it straightforwardly follows, from the above proposition, next corollary.

\begin{corollary} Every autonomous system of second-order differential equations of the form $\ddot x^i=F^i(x,\dot x)$, with $i=1,\ldots,n$, admits a superposition rule. 
\end{corollary}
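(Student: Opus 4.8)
The plan is to show that every such autonomous system is a SODE Lie system, after which the statement follows immediately from Proposition \ref{SR}. The whole argument is a short chain of reductions; there is essentially no genuine obstacle, only one point of care concerning the existence of the underlying first-order superposition rule.

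First I would pass to the associated first-order system (\ref{FOrder}) by introducing the variables $v^i=\dot x^i$, obtaining
\[
\dot x^i = v^i, \qquad \dot v^i = F^i(x,v), \qquad i=1,\ldots,n,
\]
on $\mathrm{T}\mathbb{R}^n$. Since the $F^i$ do not depend on $t$, this system is autonomous and describes the integral curves of the single, time-independent vector field
\[
X = \sum_{i=1}^n\left(v^i\,\frac{\partial}{\partial x^i} + F^i(x,v)\,\frac{\partial}{\partial v^i}\right)
\]
on $\mathrm{T}\mathbb{R}^n$.

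Next I would observe that the constant family $\{X_t\}_{t\in\mathbb{R}}=\{X\}$ is contained in the one-dimensional real Lie algebra $V=\langle X\rangle$, which is indeed a Lie algebra because $[X,X]=0$. Writing the associated time-dependent vector field as $b_1(t)\,X$ with $b_1\equiv 1$ exhibits precisely the decomposition (\ref{Lievf}), so by Lie's characterisation recalled in Section \ref{LSLS} the first-order system is a Lie system whose Vessiot--Guldberg Lie algebra $V$ is one-dimensional. This is exactly the quoted fact that every autonomous system is related to a one-dimensional Vessiot--Guldberg Lie algebra. Consequently, by Definition \ref{DefSODE}, the autonomous SODE $\ddot x^i=F^i(x,\dot x)$ is a SODE Lie system.

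Having established this, the corollary follows at once from Proposition \ref{SR}, which furnishes a superposition rule $\Psi=\pi\circ\Phi$. The only point requiring a word of care is that Proposition \ref{SR} presupposes a first-order superposition rule $\Phi$ for the reduced system; this is guaranteed by the equivalence (\ref{Lievf}) once the system is known to be a Lie system. Concretely one chooses $m$ large enough that the diagonal prolongations of $X$ to $(\mathrm{T}\mathbb{R}^n)^m$ are linearly independent on an open dense subset, which for a one-dimensional $V$ amounts merely to $\widehat X$ being non-vanishing generically, and this holds whenever $X\not\equiv 0$. The construction sketched in Section \ref{LSLS} then produces $\Phi$, and $\Psi=\pi\circ\Phi$ is the desired superposition rule. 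I do not anticipate any real difficulty beyond this bookkeeping.
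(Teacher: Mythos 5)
Your proof is correct and takes essentially the same route as the paper: the paper's entire argument is that an autonomous system is associated with a one-dimensional Vessiot--Guldberg Lie algebra (so the first-order system (\ref{FOrder}) is a Lie system, making the SODE a SODE Lie system), after which Proposition \ref{SR} yields the superposition rule. You simply make explicit the details the paper leaves implicit, namely the Lie algebra $V=\langle X\rangle$ with $[X,X]=0$ and the generic non-vanishing of the diagonal prolongation needed to construct $\Phi$.
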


Despite its theoretical interest, the above result cannot straightforwardly be used to derive superposition rules most of the times. Actually, the superposition rule
 guaranteed by Proposition \ref{SR} relies on obtaining a superposition rule for an autonomous first-order system 
 of differential equations. Considering the method sketched during Section \ref{LSLS}, we find that determining 
 this superposition rule demands the determination of all the integral curves of a vector field on $({\rm T}\mathbb{R}^n)^2$. Although the existence
  of the solution of this problem is known, its explicit description can be as difficult as solving the initial system (indeed, this is usually the case). Consequently, deriving explicitly a superposition rule for the above autonomous system frequently relies on searching an alternative superposition rule for the associated first-order system.

Many superposition rules for systems of second-order differential equations do not show an explicit dependence on the derivatives of the particular solutions. Consider, for instance,  either the linear superposition rule (\ref{LinearSup}) for the equation $\ddot x=a(t)x$,  or the affine one, $$x(t)=\lambda_1(x_{1}(t)-x_{2}(t))+\lambda_2(x_{2}(t)-x_{3}(t))+x_{3}(t),$$
for $\ddot x=a(t)x+b(t)$.  Such superposition rules are called {\it velocity free superposition rules} or even {\it free superposition rules}. The conditions ensuring the existence of such superposition rules is an interesting open problem. 

\begin{proposition}
Every system of SODEs (\ref{SODE}) admitting a free superposition rule is a SODE Lie system.
\end{proposition}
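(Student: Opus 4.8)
The plan is to reduce the statement to the first-order characterisation recalled in Section~\ref{LSLS}: a first-order system admits a superposition rule if and only if it is a Lie system. Thus, if I can manufacture from the given free superposition rule an honest superposition rule for the associated first-order system (\ref{FOrder}), then (\ref{FOrder}) is a Lie system and (\ref{SODE}) is a SODE Lie system by Definition~\ref{DefSODE}. This is the construction inverse to the one performed in Proposition~\ref{SR}: there a superposition rule for (\ref{FOrder}) is projected down to the configuration space, whereas here I lift a rule on the configuration space up to the phase space.

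Concretely, write the free superposition rule as $\Psi:\mathbb{R}^{nm}\times\mathbb{R}^{2n}\to\mathbb{R}^n$, so that the general solution of (\ref{SODE}) reads $x(t)=\Psi(x_1(t),\ldots,x_m(t);\lambda_1,\ldots,\lambda_{2n})$ with no dependence on the velocities $\dot x_a$. Differentiating this identity with respect to $t$ along any solution gives
\[
\dot x(t)=\sum_{a=1}^m\sum_{j=1}^n\frac{\partial\Psi}{\partial x_a^j}\bigl(x_1(t),\ldots,x_m(t);\lambda\bigr)\,\dot x_a^j(t),
\]
which prompts the definition of $\Phi:\mathrm{T}\mathbb{R}^{nm}\times\mathbb{R}^{2n}\to\mathrm{T}\mathbb{R}^n$ whose configuration component is $\Psi$ and whose velocity component is
\[
\Phi^{v}(x_1,\ldots,x_m,v_1,\ldots,v_m;\lambda)=\sum_{a=1}^m\sum_{j=1}^n\frac{\partial\Psi}{\partial x_a^j}(x_1,\ldots,x_m;\lambda)\,v_a^j.
\]
If $(x_a(t),v_a(t))$, $a=1,\ldots,m$, is a family of particular solutions of (\ref{FOrder}), then the $x_a(t)$ solve (\ref{SODE}) with $v_a(t)=\dot x_a(t)$, and the displayed identity shows that, as $\lambda$ varies, $(x(t),v(t))=\Phi(x_1(t),\ldots,x_m(t),v_1(t),\ldots,v_m(t);\lambda)$ runs through the general solution of (\ref{FOrder}); indeed the solutions of (\ref{SODE}) and of (\ref{FOrder}) are in bijection, and the single set of $2n$ constants that reproduces $x(t)$ automatically reproduces $v(t)=\dot x(t)$. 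Since the phase space of (\ref{FOrder}) is $2n$-dimensional, these $2n$ constants are exactly the right number, so $\Phi$ has the correct signature for a superposition rule.

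The subtle point, and the reason the freeness hypothesis is indispensable, lies in checking that $\Phi^{v}$ is a genuinely $t$-independent function of the particular solutions alone. This is precisely what freeness buys: had $\Psi$ depended on the velocities $\dot x_a$, differentiating would have produced terms $(\partial\Psi/\partial\dot x_a^j)\,\ddot x_a^j=(\partial\Psi/\partial\dot x_a^j)\,F^j(t,x_a,\dot x_a)$, reintroducing the functions $F^j$ and an explicit time dependence, so that the lift would fail to be a $t$-independent superposition rule. With freeness no second derivatives appear and $\Phi^v$ is linear in the $v_a$ with coefficients depending only on positions. The remaining bookkeeping is to transfer the genericity: the open dense set $U\subset\mathrm{T}\mathbb{R}^{nm}$ on whose initial data $\Psi$ is valid coincides with the phase space $(\mathrm{T}\mathbb{R}^n)^m\cong\mathrm{T}\mathbb{R}^{nm}$ governing the genericity of $\Phi$, so $U$, intersected with the locus where $\Psi$ is smooth, validates $\Phi$ as well. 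Once $\Phi$ is recognised as a superposition rule for (\ref{FOrder}), Lie's characterisation yields that (\ref{FOrder}) is a Lie system, and the proof is complete.
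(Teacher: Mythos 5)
Your proof is correct and follows essentially the same route as the paper: differentiate the velocity-free rule $\Psi$ along solutions to produce the velocity component $\Phi^v=\sum_{a,j}(\partial\Psi/\partial x_a^j)\,v_a^j$, assemble $\Phi=(\Psi,\Phi^v)$ as a superposition rule for the first-order system (\ref{FOrder}), and invoke Lie's characterisation to conclude it is a Lie system, hence (\ref{SODE}) is a SODE Lie system. Your added remarks on why freeness is indispensable and on the count of $2n$ constants are sound elaborations of points the paper leaves implicit, but the argument itself is the same.
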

\begin{proof}

Suppose that system (\ref{SODE}) admits a superposition rule of the special form
\begin{equation}\label{FreeSuperRule}
\begin{aligned}
x^i&=\Phi_x^i(x_{1},\ldots, x_{m};\lambda_1,\ldots,\lambda_{2n}),
\end{aligned}\qquad i=1,\ldots,n.
\end{equation}
In such a case, the general solution, $x(t)$, of the system could be expressed as
\begin{equation}\label{freeSup1}
x^i(t)=\Phi^i_x(x_{1}(t),\ldots,x_{m}(t);\lambda_1,\ldots,\lambda_{2n}), \qquad i=1,\ldots,n.
\end{equation}
If we set $p(t)=(x_{1}(t),\ldots,x_{m}(t),\dot x_{1}(t),\ldots,\dot x_{m}(t))$, $v^i=\dot x^i$, and $v_a^i=\dot x^i_a$ with $a=1,\ldots,m$, the derivative of the above expression with respect to $t$ reads
\begin{equation}\label{freeSup2}
v^i(t)=\dot x^i(t)=\sum_{a=1}^m\sum_{j=1}^n\left(v_{a}^j(t)\frac{\partial\Phi_x^i}{\partial x^j_{a}}(p(t))\right),\qquad i=1,\ldots,n.
\end{equation}
Consequently, there exists a function $$\Phi^i_v(x_1,\ldots,x_m,v_1,\ldots,v_m)=\sum_{a=1}^m\sum_{j=1}^n\left(v_{a}^j\frac{\partial\Phi_x^i}{\partial x^j_{a}}\right),\qquad i=1,\ldots,n,
$$
such that 
\begin{equation*}
\left\{
\begin{aligned}
x^i(t)&=\Phi_x^i(x_{1}(t),\ldots, x_{m}(t);\lambda_1,\ldots,\lambda_{2n}),\\
v^i(t)&=\Phi_v^{i}(x_{1}(t),\ldots, x_{m}(t),v_{1}(t),\ldots, v_{m}(t);\lambda_1,\ldots,\lambda_{2n}),
\end{aligned}\right.\qquad i=1,\ldots,n.
\end{equation*}
Therefore, system (\ref{FOrder}) admits a  superposition rule and (\ref{SODE}) becomes a SODE Lie system.
\end{proof}

Apart from the SODE Lie system notion, there exists another method to study certain systems of second-order differential equations which admit a regular Lagrangian, like Caldirola--Kanai oscillators or Milne--Pinney equations \cite{CLR08,Ru10}. Although this method cannot be used to study general second order systems, it provides us with some additional information that cannot be derived by means of SODE Lie systems when it applies, e.g. about the time-dependent constants of the motion of the system \cite{Ru10}. 

A possible generalisation of the concept of superposition rule appearing, for instance, in the theory of quasi-Lie schemes is the so-called {\it time-dependent superposition rule} \cite{CGL08}. This concept can be extended to the framework of systems of second-order differential equations as follows.

\begin{definition} We say that the map $\Psi:\mathbb{R}\times {\rm T}\mathbb{R}^{mn}\times\mathbb{R}^{2n}\rightarrow \mathbb{R}^n$ is a {\it time-dependent superposition rule} for the system of SODEs (\ref{SODE}), if its general solution $x(t)$ can be written in terms of each generic family $x_{(1)}(t),\ldots,x_{(m)}(t)$ of particular solutions, their derivatives, a set of $2n$ constants, and the time as
$$
x(t)=\Psi(t,x_{1}(t),\ldots,x_{m}(t),\dot x_{1}(t),\ldots,\dot x_{m}(t);\lambda_1,\ldots,\lambda_{2n}).
$$
\end{definition}

It is essential to analyse the existence of time-dependent superposition rules in order to understand the relevance of the practical results carried out throughout this work. As it shall be shown soon, many of the properties of these superpositions are a consequence of the features of time-dependent superposition rules for first-order systems. For instance, let us prove the following results concerning the existence of time-dependent superposition rules for systems of SODEs.

\begin{proposition}\label{GP} Every system of SODEs (\ref{SODE}) admits a time-dependent superposition rule of the form $\Psi:\mathbb{R}\times\mathbb{R}^{2n}\rightarrow\mathbb{R}^n$. 
\end{proposition}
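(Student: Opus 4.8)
The plan is to exhibit the required map as the projected generalised flow of the first-order system associated with (\ref{SODE}), after observing that the target type $\Psi:\mathbb{R}\times\mathbb{R}^{2n}\rightarrow\mathbb{R}^n$ is exactly the degenerate instance $m=0$ of the time-dependent superposition rule notion: since ${\rm T}\mathbb{R}^{0n}$ reduces to a point, such a rule uses no particular solutions at all, only the time and the $2n$ constants. The whole content of the statement is thus the recognition that the generalised flow already plays this role.

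First I would pass from the SODE (\ref{SODE}) to its associated first-order system (\ref{FOrder}) on ${\rm T}\mathbb{R}^n\approx\mathbb{R}^{2n}$ by introducing the velocities $v^i=\dot x^i$. This system describes the integral curves of a time-dependent vector field $X$ on $\mathbb{R}^{2n}$, and, as recalled in Section~\ref{LSLS}, every such time-dependent vector field admits a generalised flow $g^X:\mathbb{R}\times\mathbb{R}^{2n}\rightarrow\mathbb{R}^{2n}$ with $g^X_0=\mathrm{Id}$ and such that $t\mapsto g^X_t(x_0,v_0)$ is the integral curve passing through $(x_0,v_0)$ at $t=0$.

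Next I would set $\Phi:=g^X$ and declare the $2n$ constants to be the initial data $(\lambda_1,\ldots,\lambda_{2n})=(x_0,v_0)$. By existence and uniqueness for ordinary differential equations, every solution of (\ref{FOrder}) equals $t\mapsto\Phi(t;\lambda_1,\ldots,\lambda_{2n})$ for exactly one choice of the constants, so $\Phi$ encodes the general solution of (\ref{FOrder}). Composing with the projection $\pi:{\rm T}\mathbb{R}^n\rightarrow\mathbb{R}^n$ and using, exactly as in the proof of Proposition~\ref{SR}, that each solution $x(t)$ of (\ref{SODE}) corresponds to the unique solution $(x(t),\dot x(t))$ of (\ref{FOrder}), I obtain that the general solution of (\ref{SODE}) can be written as
$$
x(t)=\pi\circ\Phi(t;\lambda_1,\ldots,\lambda_{2n})=:\Psi(t;\lambda_1,\ldots,\lambda_{2n}),
$$
which is a time-dependent superposition rule of the claimed form $\Psi:\mathbb{R}\times\mathbb{R}^{2n}\rightarrow\mathbb{R}^n$.

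There is no genuine obstacle here; the only points deserving care are regularity and completeness. To guarantee that $\Phi$ is well defined one needs $F$ regular enough for existence and uniqueness to hold, and, strictly speaking, $g^X$ is only a local generalised flow when $X$ is not complete. Since the paper works with complete vector fields (and notes that the non-complete case requires only minor technical modifications), restricting the domain to the maximal interval of existence of each integral curve suffices and does not affect the conclusion.
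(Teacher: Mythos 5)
Your proof is correct and follows essentially the same route as the paper's: both identify the time-dependent superposition rule with $\pi$ composed with the generalised flow of the associated first-order system (\ref{FOrder}), taking the $2n$ constants to be the initial data. Your added remarks on completeness and on viewing the statement as the $m=0$ case of the definition are consistent with the paper's framework but are not needed beyond what the paper itself records.
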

\begin{proof}
Every system (\ref{SODE}) is related to a first-order system (\ref{FORD}) admitting a flow $g:(t;\lambda)\in\mathbb{R}\times\mathbb{R}^{2n}\mapsto g_t(\lambda)\in{\rm T}\mathbb{R}^n$ which allows us to cast its general solution, $\xi(t)$, into the form $\xi(t)=g_t(k)$. Consequently, the general solution, $x(t)$, of system (\ref{SODE}) can be written as $x(t)=\pi\circ g_t(k)$. In other words, system (\ref{SODE}) admits a time-dependent superposition rule depending just on $2n$ constants. 
\end{proof}

Despite the remarkable theoretical interest of the above result, it does not provide any additional method for the explicit derivation of solutions of systems of SODEs. Indeed, note that the derivation of the above time-dependent superposition rule amounts to working out the generalised flow for the first-order system (\ref{FORD}). This involves solving the system for each initial condition. If this can explicitly be done, determining the above superposition rule becomes unnecessary; and, otherwise, the superposition rule, although interesting, cannot be provided. 

Apart from the time-dependent superposition rule guaranteed by Proposition \ref{GP}, other instances can be ensured to exist. Nevertheless, their explicit determination uses to be as difficult as solving the initial system. Let us illustrate this statement  more carefully. Consider a system of SODEs (\ref{SODE}) related to the system of first-order differential equations (\ref{FOrder}), with general solution $p_x(t)=(x(t),v(t))$. Let $X$ be the time-dependent vector field associated with this system, and $Y$ any other time-dependent vector field on ${\rm T}\mathbb{R}^n\simeq\mathbb{R}^{2n}$. Their corresponding flows, $g^X,g^Y:\mathbb{R}\times{\mathbb{R}^{2n}}\rightarrow{\rm T}\mathbb{R}^n$, satisfy that $p_x(t)=g^X_t\circ (g^Y_t)^{-1}(p_y(t))$, where $p_y(t)$ is the general solution of the system describing the integral curves of $Y$. Therefore, $(g^X\circ (g^Y)^{-1})_\bigstar Y=X$ (cf. \cite{CGL08}). In particular, if $Y=0$, then $p_y(t)=\phi(\lambda_1,\ldots,\lambda_{2n})$, where $\phi:\mathbb{R}^{2n}\rightarrow{\rm T}\mathbb{R}^n$ is any diffeomorphism. Therefore, 
$$
p_x(t)=g_t^X(\lambda_1,\ldots,\lambda_{2n})\Longrightarrow x(t)=\pi\circ g_t^X(\lambda_1,\ldots,\lambda_{2n}).
$$
In other words, system (\ref{SODE}) admits a time-dependent superposition rule depending just on a set of $2n$ constants. In a similar way, if we assume $Y$ to be any autonomous, non-null, vector field, the system describing its integral curves is a Lie system ($[Y,Y]=0$) and the straightforward application of the method developed in \cite{CGL08} shows that it admits a superposition rule depending on one particular solution $p_{y_1}(t)=(y_1(t),\dot y_1(t))$, i.e. $p_y(t)=\Phi(p_{y_{1}}(t);\lambda_1,\ldots,\lambda_{2n})$. Consequently, the general solution, $x(t)$, of the system (\ref{SODE}) associated with $X$ now reads
\begin{equation*}
x(t)=\pi\circ g^X_t\circ(g^Y)^{-1}_t\circ  \Phi(g^Y_t\circ(g^{X}_t)^{-1}(p_{x_{1}}(t));\lambda_1,\ldots,\lambda_{2n}),\\
\end{equation*}
where $p_{x_1}(t)=(x_1(t),\dot x_1(t))$ is a particular integral curve of $X$. That is, system (\ref{SODE}) admits a time-dependent superposition rule depending on one particular solution.

Note that the determination of the above superposition rules relies, among other things, on the determination of the flow of the initial time-dependent vector field. Therefore, obtaining the superposition rule for (\ref{SODE}) is as difficult as solving the system (\ref{SODE}). Other similar constructions can be found. Nevertheless, most of them share the same drawbacks.

The above remarks make evident that a key point in the study of time-dependent superposition rules for systems of SODEs is the
development of procedures to explicitly determine them. As it occurred in the explicit determination of time-dependent superposition rules for systems of first-order differential equations, quasi-Lie schemes play an important role in the description of superpositions for systems of SODEs. Indeed, the same commentaries pointed out at the end of Section \ref{LSLS} can also be claimed here. 

On one hand, given a system of SODEs (\ref{SODE}), quasi-Lie schemes provide a powerful tool of large applicability to derive time-dependent superposition rules for its associated first-order system (\ref{FOrder}). From here, it is immediate to obtain a time-dependent superposition rule for (\ref{SODE}).

On the other hand, quasi-Lie schemes naturally provide a family of first-order systems, including (\ref{FOrder}), whose elements  admit the same obtained time-dependent superposition rule. It is easy to prove that all those systems of SODEs whose associated first-order systems are members of this family share the same time-dependent superposition rule. In Section \ref{QLSORE} this fact will be clarified and illustrated through the study of second-order Riccati equations.

\section{A new superposition rule for a family of SODE Lie systems}

In this Section we derive a superposition rule for a family of second-order differential equations including, as particular instances, some Painlev\'e--Ince equations \cite{EEU07}. In the process of searching for such a superposition rule, we find a family of Lie systems which will be used posteriorly to describe time-dependent and time-independent superposition rules for other second-order differential equations studied in Physics and Mathematics.

Consider the family of differential equations 
\begin{equation}\label{MDPIeq}
\ddot x+3x\dot x+x^3=f(t),
\end{equation}
with $f(t)$ being any time-dependent function. The interest in these equations is motivated by their frequent appearance in the Physics and Mathematics literature \cite{CRS05,CLS05II,KL09}. The  properties of these equations have been deeply analysed since their first analysis by Vessiot and Wallenberg \cite{Ve94,Wall03} as a particular case of second-order Riccati equations. For instance, these equations appear in  \cite{GL99} in the study of
 Riccati chain. In that work it is stated that the equations of the Riccati chain can be used to derive solutions for certain PDEs. In addition, equation (\ref{MDPIeq}) also appears in the book by Davis \cite{Da62}, and the particular case with $f(t)=0$  has recently been treated through geometric methods in \cite{CGR09,CRS05}.

The results described in previous sections can be used to study differential equations (\ref{MDPIeq}). Let us first  show that the above differential equations are SODE Lie systems and, in view of Proposition 1, they admit a superposition rule that is derived.  According to  definition \ref{SODE}, equation (\ref{MDPIeq}) is a SODE Lie system if and only if the system 
\begin{equation}\label{FO}
\left\{\begin{aligned}
\dot x&=v,\\
\dot v&=-3xv-x^3+f(t),
\end{aligned}\right.
\end{equation}
determining the integral curves of the time-dependent
vector field of the form 
\begin{equation}\label{Dec}
X_{PI}(t,x,v)=X_1(x,v)+f(t)X_2(x,v), 
\end{equation}
with
$$
X_1=v\frac{\partial}{\partial x}-(3xv+x^3)\frac{\partial}{\partial v},\qquad X_2=\frac{\partial}{\partial  v},
$$
is a Lie system.

In view of the decomposition (\ref{Dec}), all equations (\ref{MDPIeq}) are SODE Lie systems if the vector fields $X_1$ and $X_2$ are included in a finite-dimensional real  Lie algebra of vector fields $V$. This happens  if and only if $X_1$, $X_2$ and all their successive Lie brackets, i.e. the vector fields of the form 
\begin{equation}\label{Envelope}
[X_1,X_2], [X_1,[X_1,X_2]], [X_2,[X_1,X_2]], [X_1,[X_1,[X_1,X_2]]], \ldots
\end{equation}
span a finite-dimensional Lie algebra. Consider the family of vector fields on ${\rm T}\mathbb{R}$ given by {\small
\begin{equation}\label{VF}
\begin{aligned}
X_1&=v\frac{\partial}{\partial x}-(3xv+x^3)\frac{\partial}{\partial v},\,\, &X_2&=\frac{\partial}{\partial  v},\\
X_3&=-\frac{\partial}{\partial x}+3x\frac{\partial}{\partial v},\,\, &X_4&=x\frac{\partial}{\partial x}-2x^2\frac{\partial}{\partial v},\\
X_5&=(v+2x^2)\frac{\partial}{\partial x}-x(v+3x^2)\frac{\partial}{\partial v},\,\, &X_6&=2x(v+x^2)\frac{\partial}{\partial  x}+2(v^2-x^4)\frac{\partial}{\partial v},\\
X_7&=\frac{\partial}{\partial x}-x\frac{\partial}{\partial v},\,\,
&X_8&=2x\frac{\partial}{\partial x}+4v\frac{\partial}{\partial v},
\end{aligned}
\end{equation}}where $X_3=[X_1,X_2]$, $-3 X_4=[X_1,X_3]$, $X_5=[X_1,X_4]$, $X_6=[X_1,X_5]$, $X_7=[X_2,X_5]$, $X_8=[X_2,X_6]$. Then, the vector fields $X_1,\ldots, X_8$ are linearly independent  over $\mathbb{R}$. Additionally, in view of the previous commutation relations and 
{\small\begin{equation}\label{Rel}
\begin{array}{lllll}
\left[X_1,X_6\right]=0, &[X_1,X_7]=\frac 12 X_8,&\left[X_1,X_8\right]=-2X_1,&[X_2,X_3]=0,\\
\left[X_2,X_4\right]=0, &[X_2,X_7]=0,           &[X_2,X_8]=4X_2,            &[X_3,X_4]=-X_7,\\
\left[X_3,X_5\right]=-\frac 12X_8,&[X_3,X_6]=-2X_1,&[X_3,X_7]=-2X_2,&[X_3,X_8]=2X_3,\\
\left[X_4,X_5\right]=-X_1,&[X_4,X_6]=0,         &[X_4,X_7]=X_3,             &[X_4,X_8]=0,\\
\left[X_5,X_6\right]=0,   &[X_5,X_7]=-3X_4,&[X_5,X_8]=-2X_5,&[X_6,X_7]=-2X_5,\\
&\left[X_6,X_8\right]=-4X_6,&[X_7,X_8]=2X_7,&\\
\end{array}
\end{equation}}it follows that the vector fields $X_1,\ldots,X_8$ span an eight-dimensional Lie algebra of vector fields $V$ containing $X_1$ and $X_2$.
 Therefore, equation  (\ref{MDPIeq}) is a SODE Lie system.  Moreover,  the elements of the following  family of traceless real $3\times 3$ matrices
\begin{gather*}
M_1=-\left(
\begin{array}{ccc}
0&1&0\\
0&0&1\\
0&0&0.
\end{array}\right),
M_2=-\left(
\begin{array}{ccc}
0&0&0\\
0&0&0\\
1&0&0.
\end{array}\right),
M_3=\left(
\begin{array}{ccc}
0&0&0\\
1&0&0\\
0&-1&0.
\end{array}\right),\\
M_4=\frac{1}{3}\left(
\begin{array}{ccc}
1&0&0\\
0&-2&0\\
0&0&1.
\end{array}\right),
M_5=\left(
\begin{array}{ccc}
0&1&0\\
0&0&-1\\
0&0&0.
\end{array}\right),
M_6=\left(
\begin{array}{ccc}
0&0&2\\
0&0&0\\
0&0&0.
\end{array}\right),\\
M_7=-\left(
\begin{array}{ccc}
0&0&0\\
1&0&0\\
0&1&0.
\end{array}\right),
M_8=\left(
\begin{array}{ccc}
2&0&0\\
0&0&0\\
0&0&-2.
\end{array}\right),
\end{gather*}
satisfy the same commutation relations as the 
vector fields $X_1,\ldots,X_8$, i.e. the linear map  $\rho:\mathfrak{sl}(3,\mathbb{R})\rightarrow V$,
such that $\rho(M_\alpha)=X_\alpha$, with $\alpha=1,\ldots,8$, is a
Lie algebra isomorphism. Consequently, the finite-dimensional Lie algebra of vector
fields $V$ is isomorphic to $\mathfrak{sl}(3,\mathbb{R})$ and the systems of differential equations describing the integral curves for the time-dependent vector fields 
\begin{equation}\label{family}
X(t,x,v)=\sum_{\alpha=1}^8b_\alpha(t)X_\alpha(x,v),
\end{equation}
are Lie systems related to a Vessiot--Guldberg Lie algebra isomorphic to $\mathfrak{sl}(3,\mathbb{R})$.

Recall that, in view of Proposition \ref{SR}, a superposition rule for (\ref{MDPIeq}) can be obtained by means of a superposition rule for the Lie system (\ref{FO}). As we already stated in Section \ref{LSLS}, a superposition rule for a
certain Lie system describing the integral curves of a time-dependent vector field $X$ admitting a decomposition of the form (\ref{family}) can be obtained through the first-integrals for the distribution $\mathcal{D}$ spanned by the diagonal prolongations $\widehat X_1\ldots, \widehat X_8$ on a certain ${\rm T}\mathbb{R}^{n(m+1)}$ in such a way that their projections $\pi_*(\widehat X_\alpha)$, with $\alpha=1,\ldots,8$, are linearly independent on a dense open subset of ${\rm T}\mathbb{R}^{nm}$.

Consider the functions $F_{abc}$, with $a,b,c=0,\ldots,4$, given by
\begin{equation}\label{Fabc}
F_{abc}=v_a(x_c-x_b)+v_b(x_a-x_c)+v_c(x_b-x_a)+(x_a-x_b)(x_b-x_c)(x_c-x_a).
\end{equation}
Such functions satisfy that $F_{abc}=F_{bca}=F_{cab}=-F_{bac}=-F_{cba}=-F_{acb}$, for all $a,b,c=0,\ldots,4$, and they are useful to determine the superposition rule we are looking for. 
If we take now the diagonal prolongations to ${\rm T}\mathbb{R}^{5}$ of
the family of vector fields (\ref{VF}), we can get, e.g. by means of any symbolic manipulation program, that the vector fields $\pi(\widetilde X_\alpha)$, with $\alpha=1,\ldots,8$, on ${\rm T}\mathbb{R}^4$ are linearly independent at those points $p\equiv(x_1,\ldots,x_4,v_1,\ldots,v_4)\in{\rm T}\mathbb{R}^4$ satisfying $F_{123}(p)F_{124}(p)F_{134}(p)F_{234}(p)\neq 0$. 
Such a set is dense in ${\rm T}\mathbb{R}^4$ and hence the involutive distribution $\mathcal{D}$ spanned by $\widehat X_1\ldots,\widehat X_8$ on ${\rm T}\mathbb{R}^5$ is eight-dimensional almost everywhere. Additionally, there exist, at least locally, two first-integrals for the vector fields of the distribution $\mathcal{D}$ which can be used to derive a superposition rule. 

As the vector fields $\widehat X_1$, $\widehat X_2$ and their successive Lie brackets, i.e. the vector fields (\ref{Envelope}), span the distribution $\mathcal{D}$, it can be proved that giving a first-integral
$F:{\rm T}\mathbb{R}^{5}\rightarrow\mathbb{R}$ for the vector fields  $\widehat X_1$ and $\widehat X_2$, i.e. $\widehat X_1F=\widehat
X_2F=0$, is equivalent to giving a first-integral for the distribution $\mathcal{D}$.

These integrals can be obtained by applying the method of characteristics to the vector fields $\widehat X_1$ and $\widehat X_2$. The calculation of these first-integrals is long and it is detailed in the Appendix. As a result, we get that two first-integrals for $\widehat X_1$ and $\widehat X_2$ and hence for all the vector fields of the distribution $\mathcal{D}$ are
$$
\Lambda_1(p)=\frac{F_{431}(p)F_{210}(p)}{F_{421}(p)F_{310}(p)}\qquad {\rm and}\qquad \Lambda_2(p)=\frac{F_{431}(p)F_{420}(p)}{F_{421}(p)F_{430}(p)}
$$
where now $p\equiv(x_0,x_1,\ldots,x_4,v_0,v_1,\ldots,v_4)\in{\rm T}\mathbb{R}^4$.  We can obtain $v_0$ from the expression of $\Lambda_1$ as
\begin{multline}\label{vExp}
v_0=\frac{(v_1(x_2-x_0)+v_2(x_0-x_1)+(x_1-x_0)(x_0-x_2)(x_2-x_1))F_{431}}{(x_2-x_1)F_{431}+(x_1-x_3)F_{421}\Lambda_1}+\\ \frac{(v_3(x_1-x_0)+v_1(x_0-x_3)+(x_0-x_1)(x_1-x_3)(x_3-x_0))F_{421}\Lambda_1}{(x_2-x_1)F_{431}+(x_1-x_3)F_{421}\Lambda_1},
\end{multline}
and if we substitute this value in the expression for $\Lambda_2$, we obtain the expression 
{\small
\begin{equation}\label{Sup}
x_0=\frac{x_2F_{431}-G_{3124}\Lambda_2-G_{2134}\Lambda_1+x_3F_{421}\Lambda_1\Lambda_2}{F_{431}+(F_{124}-F_{324})\Lambda_1+(F_{412}-F_{312})\Lambda_2+\Lambda_1\Lambda_2F_{421}},
\end{equation}
}where
\begin{multline*}
G_{abcd}=x_a((v_d-v_c)x_b+(v_b-v_d)x_c+(x_b-x_c)x_bx_c+(x_c-x_b)x_ax_d)+\\x_d((v_c-v_a)x_b+(v_a-v_b)x_c+(x_c-x_b)x_bx_c+(x_b-x_c)x_ax_d).
\end{multline*}
Note that the first-integrals $\Lambda_1$ and $\Lambda_2$ satisfy that 
$$
\Lambda_j(x_{0}(t),\ldots,x_{4}(t),v_{0}(t),\ldots,v_{4}(t))={\rm cte.} \qquad j=1,2,
$$
where $(x_{a}(t),v_{a}(t))$, with $a=0,\ldots,4$, are any family of five particular solutions of system (\ref{FO}). Consequently, if we fix $x_{0}(t)=x(t)$, $v_{0}(t)=\dot x_{0}(t)$ and take into account expression (\ref{Sup}), we see that the general solution of system (\ref{FO}), namely, $x(t)$, can be written in terms of each generic family of four particular solutions $x_{a}(t)$, with $a=1,\ldots,4$, their derivatives in terms of $t$, and two real constants $\lambda_1$, $\lambda_2$ as follows
{\small
\begin{equation}\label{FinSup}
x(t)=\frac{x_2(t)F_{431}(t)-G_{3124}(t)\lambda_2-G_{2134}(t)\lambda_1+x_3(t)F_{421}(t)\lambda_1\lambda_2}{F_{431}(t)+(F_{124}(t)-F_{324}(t))\lambda_1+(F_{412}(t)-F_{312}(t))\lambda_2+\lambda_1\lambda_2F_{421}(t)},
\end{equation}}where the time-dependent functions $F_{abc}(t)$ and $G_{abcd}(t)$ are obtained evaluating the functions $F_{abc}$ and $G_{abcd}$ on the curves $(x_{1}(t),\ldots,x_{4}(t),\dot x_{1}(t),\ldots,\dot x_{4}(t))$.

In order to illustrate the previous result by means of a simple example, let us consider, for instance, equation (\ref{MDPIeq}) with $f(t)=0$. By direct inspection, we can get the set of particular solutions
\begin{equation}\label{PS}
x_{1}(t)=0,\qquad x_{2}(t)=\frac 2t,\qquad x_3(t)=\frac{2t}{2+t^2},\qquad x_4(t)=\frac{1+2t}{t+t^2}.
\end{equation}
These particular solutions can be used to determine the functions appearing in the expression (\ref{Sup}), i.e. $G_{2134}$, $F_{431}$, etc. More specifically, we get, in view of the particular solutions (\ref{PS}), that 
$$
G_{3124}(t)=\frac{2t^{-2}}{(t^2+1)(t+1)},\,\,  F_{431}(t)=\frac{2t^{-1}}{(t^2+1)(t+1)},\,\, G_{2134}(t)=\frac{-4t^{-1}}{(t^2+1)(t+1)}$$
and 
$$F_{124}(t)=\frac{2}{t^2+t^3},\quad F_{324}(t)=\frac{2}{t^2+t^3+t^4+t^5},\quad F_{312}(t)=\frac{2}{2t+t^3}.
$$
Finally, making use of expression (\ref{Sup}) and the above functions, we get that the general solution for equation (\ref{MDPIeq}) is
$$
x(t)=\frac{(1+2t\lambda_1)(-1+\lambda_2)}{t(-1+\lambda_2)+t^2\lambda_1(-1+\lambda_2)+(-1+\lambda_1)\lambda_2}.
$$

In view of Proposition \ref{SR}, expression (\ref{FinSup}) is not only a superposition rule for the equation (\ref{MDPIeq}), but also for any SODE Lie system (\ref{SODE}) satisfying that its corresponding system (\ref{FOrder}) is related to a time-dependent vector field that put into a form similar to (\ref{family}), i.e. it takes values in $V$. 
Many instances of the family of Lie systems (\ref{family}) are associated with interesting SODE Lie systems with applications to Physics or related to interesting mathematical problems. In all these cases, the theory of Lie systems can be applied to investigate these second-order differential equations, recover some of their known properties, and, possibly, provide new results. Let us illustrate this assertion by means of a few examples.

Another equation appearing in the Physics literature \cite{CLS05II,CLS05,TT07} which can be analysed by means of our methods is 
\begin{equation}\label{Exam2}
\ddot x+3x\dot x +x^3+\lambda_1x=0,
\end{equation}
which is a special kind of Li\'enard equation $\ddot x+f(x)\dot x+g(x)=0$, with $f(x)=3x$ and $g(x)=x^3+\lambda_1x$. The above equation can also be related to a generalised form of an Emden equation occurring in the thermodynamical study of equilibrium configurations of spherical clouds of gas acting under the mutual attraction of their molecules \cite{DT90}.

As in the study of equations (\ref{MDPIeq}), by considering the new variable $v=\dot x$, equation (\ref{Exam2}) becomes the system
\begin{equation}
\left\{\begin{aligned}
\dot x&=v,\\
\dot v&=-3xv-x^3-\lambda_1x,
\end{aligned}\right.
\end{equation}
describing the integral curves of the vector field $X=X_1-\lambda_1/2(X_7+X_3)$ included in the family (\ref{family}). Consequently, the expression (\ref{Sup}) can be used to derive the solution of equation (\ref{Exam2}) in terms of a set of four particular solutions.

Finally, we can also treat by our methods the equation  
\begin{equation}\label{general}
\ddot x+3x\dot x +x^3+f(t)(\dot x+x^2)+g(t)x+h(t)=0,
\end{equation}
containing, as particular cases, all the previous examples \cite{KL09}.
The system of first-order differential equations associated with this equation reads
\begin{equation}\label{FirstGeneral}
\left\{\begin{aligned}
\dot x&=v,\\
\dot v&=-3xv-x^3-f(t)(v+x^2)-g(t)x-h(t).
\end{aligned}\right.
\end{equation}
Hence, this system describes the integral curves of the time-dependent vector field 
$$X_t=X_1-h(t)X_2-\frac 14 f(t)\,(X_8-2X_4)-\frac 12 g(t)\,(X_7+X_3).$$
Therefore, equation (\ref{general}) is a SODE Lie system and the theory of
Lie systems can be used to analyse its properties. In particular, the expression (\ref{Sup}) provides us with 
  the general solutions for these equations out of each generic set of four particular solutions. 

Some particular cases of system
(\ref{general}) were pointed out in \cite{CLS05,KL09}. Additionally, the case with
$f(t)=0$, $g(t)=\omega^2(t)$ and $h(t)=0$ was studied in
 \cite{CLS05II} and it is related to harmonic oscillators. The case with $g(t)=0$ and $h(t)=0$ appears in the catalogue of equations possessing the Painlev\'e property \cite{In86}. Finally, our result generalises Vessiot's result \cite{Ve95} describing the existence of an expression determining the general solution of system  like (\ref{general}) (but with constant coefficients) in terms of four of their particular solutions, their derivatives and two constants.

\section{Quasi-Lie schemes and second-order Riccati equation}\label{QLSORE}
In this Section we derive a time-dependent superposition rule for the second-order Riccati equation \cite{CRS05}
\begin{equation}\label{NLe}
\ddot x+\left(b_0(t)+b_1(t)x\right)\dot x+a_0(t)+a_1(t)x+a_2(t)x^2+a_3(t)x^3=0,
\end{equation}
with $a_3(t)>0$, $a_3(0)=1$, $b_0(t)=\frac{a_2(t)}{\sqrt{a_3(t)}}-\frac{\dot a_3(t)}{2a_3(t)}$ and $b_1(t)=3\sqrt{a_3(t)}$,  by means of the theory of quasi-Lie schemes. 

We first introduce a new variable 
$v=\dot x$ and transform
equation (\ref{NLe}) into the system of first-order differential equations
\begin{equation}\label{FOSOR}
\left\{
\begin{aligned}
\dot x&=v,\\
\dot v&=-\left(b_0(t)+b_1(t)x\right)v-a_0(t)-a_1(t)x-a_2(t)x^2-a_3(t)x^3.
\end{aligned}\right.
\end{equation}
Consider the following set of vector fields
\begin{align*}
Y_1=v\frac{\partial}{\partial x},\,\,\,\, Y_2=v\frac{\partial}{\partial v},\,\,\,\, Y_3=xv\frac{\partial}{\partial v},\,\,\,\, Y_4=
\frac{\partial}{\partial v},\,\,\,\,\\ 
Y_5=x\frac{\partial}{\partial v},\,\,\,\, Y_6=x^2\frac{\partial}{\partial v},\,\,
\,\, Y_7=x^3\frac{\partial}{\partial v},\,\,\,\, Y_8=x\frac{\partial}{\partial x},
\end{align*}
spanning a linear space of vector fields $V=\langle Y_1,\ldots,Y_8\rangle$. The 
system (\ref{FOSOR}) describes the integral curves of the time-dependent vector field 
$$
Y_t=Y_1-b_0(t)Y_2-b_1(t)Y_3-a_0(t)Y_4-a_1(t)Y_5-a_2(t)Y_6-a_3(t)Y_7,
$$
and, therefore, as $Y_t\in V$, for every $t\in\mathbb{R}$, we get that $Y\in V(\mathbb{R})$. 
Let us denote ${\rm ad}_Z(X)=[Z,X]$, for every vector fields $X,Z$. Hence, as 
$${\rm ad}_{Y_3}^n(Y_6)=\overbrace{{\rm ad}_{Y_3}\circ\ldots\circ{\rm ad}_{Y_3}}^{n\,{\rm times}}(Y_6)=(-x)^{n+2}\frac{\partial}{\partial v},$$
each Lie algebra of vector fields $V'$ containing the vector fields $Y_3$ and $Y_6$ must include the above infinite family of linearly independent vector fields over $\mathbb{R}$. Consequently, as $Y_3$ and $Y_6$ are contained in $V$, there exists no finite-dimensional Lie algebra of vector fields $V'$ including $V$ and, therefore, equation (\ref{FOSOR}) is not a Lie system.

However, we can deal with such a system as a   quasi-Lie system with respect to a quasi-Lie scheme \cite[Theorem 4]{CGL08} and therefore 
this could be used to build up a time-dependent superposition rule for equation (\ref{NLe}). 
In fact, define the linear space $W=\langle Y_2,Y_8\rangle$, which  is a two-dimensional Abelian Lie algebra of
vector fields and, in view of the commutation relations
$$
\begin{aligned}
\left[Y_2,Y_1\right]&=Y_1,\,\, &[Y_2,Y_3]&=0,\,\, &[Y_2,Y_4]&=-Y_4,&\left[Y_2,Y_5\right]&=-Y_5,\\
\left[Y_2,Y_6\right]&=-Y_6,\,\, &[Y_2,Y_7]&=-Y_7,\,\,&[Y_8,Y_1]&=-Y_1,\,\, &[Y_8,Y_3]&=Y_3,\\ \left[Y_8,Y_4\right]&=0,&\left[Y_8,Y_5\right]&=Y_5,\,\, &[Y_8,Y_6]&=2Y_6,\,\, &[Y_8,Y_7]&=3Y_7,\\
\end{aligned}
$$
we get that $[W,V]\subset V$. Hence, the pair $(W,V)$ becomes a quasi-Lie scheme $S(W,V)$. The theory of quasi-Lie schemes shows that the time-dependent vector field $Y$ can be transformed through any element $g$ of the group of transformations of the scheme $\mathcal{G}(W)$ into a new time-dependent vector field $g_\bigstar Y \in V(\mathbb{R})$, see \cite[Proposition 1]{CGL08}.

The time-dependent transformations associated with elements of $\mathcal{G}(W)$ are
\begin{equation}\label{change}
\left\{\begin{aligned}
x(t)&=\gamma(t)x'(t),\\
v(t)&=\beta(t)v'(t),
\end{aligned}\right.
\end{equation}
with $\gamma(t)>0,\beta(t)>0$ and $\gamma(0)=\beta(0)=1$. The above family of time-dependent changes of variables transforms system (\ref{FOSOR})
into
\begin{equation}
\left\{
\begin{aligned}
\frac{dx'}{dt}&=\frac{\beta(t)}{\gamma(t)}v'-\frac{\dot\gamma(t)}{\gamma(t)}x',\\
\frac{dv'}{dt}&=-\frac{a_0(t)}{\beta(t)}-\gamma(t)\left(b_1(t)v'+\frac{a_1(t)}{\beta(t)}\right)x'-\frac{a_2(t)\gamma^2(t)}{\beta(t)}x'^2\\&\qquad\qquad\qquad\qquad\qquad\qquad-\frac{a_3(t)\gamma^3(t)}{\beta(t)}x'^3-\frac{b_0(t)\beta(t)+\dot \beta(t)}{\beta(t)}v'.
\end{aligned}\right.
\end{equation}
In order to relate this system of first-order differential equations to a SODE, we have to choose $\gamma(t)=K$ for a certain real constant $K$. As $\gamma(0)=1$, then we choose $\gamma(t)=1$ and the previous system becomes
\begin{equation}\label{Sys2}
\left\{
\begin{aligned}
\frac{dx'}{dt}&=\beta(t)v',\\
\frac{dv'}{dt}&=-\frac{a_0(t)}{\beta(t)}-\left(b_1(t)v'+\frac{a_1(t)}{\beta(t)}\right)x'-\frac{a_2(t)}{\beta(t)}x'^2\\ &\qquad\qquad\qquad\qquad\qquad\qquad  -\frac{a_3(t)}{\beta(t)}x'^3-\frac{b_0(t)\beta(t)+\dot \beta(t)}{\beta(t)}v'.
\end{aligned}\right.
\end{equation}
Let us try to relate this system  to one of the Lie
systems of the family (\ref{family}), e.g. the one describing the integral curves of a time-dependent vector field $X_t=f_1(t)X_1+f_2(t)X_2+(f_3(t)/2)(X_3+X_7)+(f_4(t)/4)(X_8-2X_4)$. If we fix $\beta(t)=\sqrt{a_3(t)}$ in (\ref{Sys2}), we obtain
\begin{equation}\label{trasys2}
\left\{
\begin{aligned}
\frac{dx'}{dt}&=\sqrt{a_3(t)}v',\\
\frac{dv'}{dt}&=-\frac{a_0(t)}{\sqrt{a_3(t)}}-\sqrt{a_3(t)}(3v'x'+x'^3)-\frac{a_1(t)}{\sqrt{a_3(t)}}x'-\frac{a_2(t)}{\sqrt{a_3(t)}}(v'+x'^2),
\end{aligned}\right.
\end{equation}
and, consequently, the above system is a Lie system associated with the time-dependent vector field
$$
X_t=\sqrt{a_3(t)}X_1-\frac{a_0(t)}{\sqrt{a_3(t)}}X_2-\frac{a_1(t)}{2\sqrt{a_3(t)}}(X_3+X_7)-\frac{a_2(t)}{4\sqrt{a_3(t)}}(X_8-2X_4).
$$
As the above time-dependent vector field belongs to the family (\ref{family}), the first component of the general solution $(x'(t),v'(t))$ of system (\ref{trasys2}) can be described by means of the expression (\ref{FinSup}) in terms of four generic particular solutions $(x'_{a}(t),v'_{a}(t))$. Moreover, as $\gamma(t)=1$ and $\beta(t)=\sqrt{a_3(t)}$,  then $x(t)=x'(t)$, $x_{a}(t)=x'_{a}(t)$ and $v'_{a}(t)=a_3(t)^{-1/2}dx_{a}/dt(t)$. Using these relations in expression (\ref{FinSup}), we get a time-dependent superposition rule for second-order
Riccati equations. More specifically, the general solution $x=x(t)$ for any instance of second-order Riccati equation of the form (\ref{NLe}) can be written in terms of a set of four particular solutions $x_1=x_1(t)$, $x_2=x_2(t)$, $x_3=x_3(t)$ and $x_4=x_4(t)$ and its derivatives $\dot x_1=\dot x_1(t)$, $\dot x_2=\dot x_2(t)$, $\dot x_3=\dot x_3(t)$ and $\dot x_4=\dot x_4(t)$ as

{\small
\begin{equation}\label{SupRicc}
x=\frac{x_2\widetilde F_{431}-\widetilde G_{2134}\lambda_1-\widetilde G_{3124}\lambda_2-x_3\widetilde F_{412}\lambda_1\lambda_2}{\widetilde F_{431}+(\widetilde F_{124}-\widetilde F_{324})\lambda_1+(\widetilde F_{412}-\widetilde F_{312})\lambda_2+\lambda_1\lambda_2\widetilde F_{421}},
\end{equation}
}where
\begin{multline*}
\widetilde G_{abcd}=x_a(a^{-1/2}_3(t)(\dot x_d-\dot x_c)x_b+a^{-1/2}_3(t)(\dot x_b-\dot x_d)x_c+\\(x_b-x_c)x_bx_c+(x_c-x_b)x_ax_d)+x_d(a^{-1/2}_3(t)(\dot x_c-\dot x_a)x_b+\\a^{-1/2}_3(t)(\dot x_a-\dot x_b)x_c+(x_c-x_b)x_bx_c+(x_b-x_c)x_ax_d),
\end{multline*}
and
\begin{multline*}
\widetilde F_{abc}=a^{-1/2}_3\dot x_a(x_c-x_b)+a^{-1/2}_3\dot x_b(x_a-x_c)+\\+a^{-1/2}_3\dot x_c(x_b-x_a)+(x_a-x_b)(x_b-x_c)(x_c-x_a),
\end{multline*}
with $a,b,c,d=1,\ldots,4$.

Finally, it is just worth noting that, as pointed out in Section \ref{SRSO}, time-dependent superposition rules for a system of SODEs appear naturally related to families of systems admitting the same superposition rule. Indeed, note that  if we had initiated this section by analysing system (\ref{NLe}) with $a_3(t)=1$, we could have proceeded exactly in the same way as before. Nevertheless, when reaching the system (\ref{trasys2}) with $a_3(t)=1$, we could have noticed that all systems with a more general $a_3(t)$ are quasi-Lie systems admitting the same superposition rule as our particular instance. Consequently, this would have shown the existence of a more general system of SODEs, namely, (\ref{NLe}), admitting the same superposition rule.

\section{Appendix}

We have relegated to this appendix various calculations that, although necessary to obtain certain previously stated results, i.e. the common first-integrals for vector fields of $\mathcal{D}$, do not deserve to be detailed in the main body of the article as they do not provide any relevant knowledge.

Recall that a function $F:{\rm T}\mathbb{R}^5\rightarrow \mathbb{R}$ is a common first-integral for every vector field in $\mathcal{D}$ if and only if $\widehat X_1 F=\widehat X_2 F=0$, where $\widehat X_1,\widehat X_2$ are the diagonal prolongations to ${\rm T}\mathbb{R}^5$ of the vector fields $X_1,X_2$. 
Therefore, such a function $F$ must be a solution of the equation 
$$
\widehat X_2F=\sum_{a=0}^4\frac{\partial F}{\partial v_a}=0,
$$
written using the coordinate system $\{x_0,\ldots, x_4,v_0,\ldots,v_4\}$. This equation can be solved using the method of characteristics. Such a method explains that the solutions of the above equation are constant along the solutions, the so-called {\it characteristics curves}, of the system
$$
dx_0=\ldots=dx_4=0,\qquad dv_0=\ldots=dv_4.
$$
So, the characteristics of equation $\widehat X_2F=0$ are curves $(x_0,\ldots,x_4,v_0(s),\ldots,v_4(s))$  such that $\xi_0=v_0(s)-v_4(s)$, $\xi_1=v_1(s)-v_4(s)$, $\xi_2=v_2(s)-v_4(s)$ and $\xi_3=v_3(s)-v_4(s)$ for certain real constants $\xi_0,\ldots,\xi_3$. Thus, there exists a function $F_2:\mathbb{R}^9\rightarrow \mathbb{R}$ such that $F(x_0,\ldots,x_4,v_0,\ldots,v_4)=F_2(x_0,\ldots,x_4,\xi_0,\ldots,\xi_3)$. In other words, function $F$ depends only on the variables $x_0,\ldots,x_4,\xi_0,\ldots,\xi_3$.

Consider the coordinate system $\{x_0,\ldots,x_4,\xi_0,\ldots,\xi_3,v_4\}$ on ${\rm T}\mathbb{R}^5$. In terms of the new coordinate system, the vector field $\widehat X_1$ reads
\begin{multline*}
\widehat X_1=\sum_{a=0}^3\left(\xi_a\frac{\partial}{\partial x_a}-\left(3x_a \xi_a-x_4^3+x_a^3\right)\frac{\partial}{\partial \xi_a}\right)-(x_4^3+3x_4v_4)\frac{\partial}{\partial v_4}+\\+v_4\left(\sum_{a=0}^4\frac{\partial}{\partial x_a}-3\sum_{a=0}^3(x_a-x_4)\frac{\partial}{\partial \xi_a}\right).
\end{multline*}
As we assumed that $F$ is a first-integral of the distribution $\mathcal{D}$, it is a first-integral for the vector fields $\widehat X_1$ and $\widehat X_2$. Taking into account that $F$ is a solution of the equation $\widehat X_2F=0$ and, therefore, it depends only on the variables $x_0,\ldots,x_4,\xi_0,\ldots,\xi_3$, the equation $\widehat X_1F=0$ yields
$$
\sum_{a=0}^3\left(\xi_a\frac{\partial F_2}{\partial x_a}-\left(3x_a \xi_a-x_4^3+x_a^3\right)\frac{\partial F_2}{\partial \xi_a}\right)+v_4\left(\sum_{a=0}^4\frac{\partial F_2}{\partial x_a}-3\sum_{a=0}^3(x_a-x_4)\frac{\partial F_2}{\partial \xi_a}\right)=0,$$
and in view of the dependence of the function $F_2$, we get that there exist two vector fields 
\begin{equation*}
Z_1=\sum_{a=0}^3\left(\xi_a\frac{\partial}{\partial x_a}-\left(3x_a \xi_a-x_4^3+x_a^3\right)\frac{\partial }{\partial \xi_a}\right),\,
Z_2=\sum_{a=0}^4\frac{\partial}{\partial x_a}-3\sum_{a=0}^3(x_a-x_4)\frac{\partial}{\partial \xi_a},
\end{equation*}
such that $\widehat X_1F=Z_1F_2+v_4Z_2F_2=0$. Therefore, we have that $Z_1F_2=0$ and $Z_2F_2=0$. In consequence, the function $F$ is a first-integral of the distribution $\mathcal{D}$ if and only if it is a first-integral for the vector fields $Z_1$ and $Z_2$ depending on the variables $x_0,\ldots,x_4,\xi_0,\ldots,\xi_3$. 

The first-integrals of the vector field $Z_2$ depending just on the above variables can be determined by means of the characteristic curves given by the following system
\begin{equation*}
 d(x_a-x_4)=0, \qquad dx_4=-\frac{d\xi_a}{3(x_a-x_4)},\qquad a=0,\ldots,3.
\end{equation*}
Hence, such first-integrals depend on the functions 
$$
\left\{
\begin{aligned}
\eta_a&=x_a-x_4,\\
\phi_a&=3\eta_ax_4+\xi_a,\\
\end{aligned}\right.\qquad a=0,\ldots,3,
$$
Consequently, given a first-integral $F$ of the distribution $\mathcal{D}$, there exists a function $F_3:\mathbb{R}^8\rightarrow\mathbb{R}$ such that $F(x_0,\ldots,x_4,v_0,\ldots,v_4)=F_3(\eta_0,\ldots,\eta_3,\phi_0,\ldots,\phi_3)$, i.e. the function $F$  actually only depends on the variables $\eta_0,\ldots,\eta_3,\phi_0,\ldots,\phi_3$.

Taking into account the dependence of the function $F$ in the above variables, the equation $Z_1F=0$ reads in the coordinate system $\{\eta_0,\ldots,\eta_3,\phi_0,\ldots,\phi_3,x_4,v_4\}$,
\begin{equation*}
\sum_{a=0}^3\left[(\phi_a-3\eta_a x_4)\frac{\partial F}{\partial \eta_a}-\left(3\eta_a\phi_a-6\eta_a^2x_4+3\eta_a x_4^2+\eta_a^3\right)\frac{\partial F}{\partial \phi_a}\right]=0.
\end{equation*}
Collecting terms with different powers of $x_4$, we obtain that $Z_2F=\Omega_0F_3+x_4\Omega_1F_3-3x_4^2\Omega_2F_3=0$, with
\begin{equation*}
\begin{gathered}
\Omega_0=\sum_{a=0}^3\left(\phi_a\frac{\partial }{\partial \eta_a}-\left(3\eta_a\phi_a+\eta_a^3\right)\frac{\partial}{\partial \phi_a}\right),\qquad 
\Omega_1=\sum_{a=0}^3\left(-3\eta_a\frac{\partial}{\partial \eta_a}+6\eta_a^2\frac{\partial}{\partial \phi_a}\right),\\
\Omega_2=\sum_{a=0}^3\eta_a\frac{\partial}{\partial \phi_a}.
\end{gathered}
\end{equation*}
As $F_3$ does not depend on $x_4$ in the  set of coordinates we have chosen, then $\Omega_aF_3=\Omega_aF=0$, for $a=0,1,2$. The method of characteristics for the equation $\Omega_1F_3=0$ implies that there exists a function $F_4:\mathbb{R}^7\rightarrow\mathbb{R}$ such that $F(x_0,\ldots,x_4,v_0,\ldots,v_4)=F_4(\delta_0,\ldots,\delta_3,L_1,L_2,L_3)$ with
\begin{equation*}
\left\{
\begin{aligned}
\delta_a&=\phi_a+\eta_a^2,\qquad &a&=0,\ldots,3,\\
L_a&=\frac{\eta_a}{\eta_0},\qquad &a&=1,2,3.\\
\end{aligned}\right.
\end{equation*}
Now, using the coordinate system $\{\delta_0,\ldots,\delta_3,L_1,L_2,L_3,\eta_0,x_4,v_4\}$, we get 
$$
\Omega_2F=\Omega_2F_4=\eta_0\left(\frac{\partial F_4}{\partial \delta_0}+\sum_{a=1}^3 L_a\frac{\partial F_4}{\partial \delta_a}\right)=0,
$$
and, repeating the previous procedure, we see that there exists a function $F_5:\mathbb{R}^6\rightarrow\mathbb{R}$ such that $F(x_0,\ldots,x_4,v_0,\ldots,v_4)=F_5(L_1,L_2,L_3,\Delta_1,\Delta_2,\Delta_3)$, where $\Delta_a=L_a\delta_0-\delta_a$ and  $a=1,2,3$.

As we have shown that finding a first-integral $F$ for the vector fields of the distribution $\mathcal{D}$ reduces to looking for a first-integral $F_5$  of the vector field $\Omega_0$ depending on the variables $L_1,L_2,L_3, \Delta_1,\Delta_2,\Delta_3$, we still have to analyse the condition $\Omega_0F_5=0$ to determine completely the form of the first-integrals for the distribution $\mathcal{D}$. 

By choosing the coordinate system $\{L_1,L_2,L_3,\Delta_1,\Delta_2,\Delta_3,\delta_0,\eta_0,x_4,v_4\}$, the equation $\Omega_0F_5=0$ reads
$$
\eta_0^2\sum_{a=1}^3\left((L_a-L_a^2)\frac{\partial F_5}{\partial L_a}-L_a\Delta_a\frac{\partial F_5}{\partial \Delta_a}\right)-\left(\sum_{a=1}^3\Delta_a\frac{\partial F_5}{\partial L_a}+\delta_0\Delta_a\frac{\partial F_5}{\partial \Delta_a}\right)=0,
$$
and considering the vector fields
$$
\Xi_1=\sum_{a=1}^3(L_a-L_a^2)\frac{\partial }{\partial L_a}-L_a\Delta_a\frac{\partial}{\partial \Delta_a},\quad \Xi_2=\sum_{a=1}^3\Delta_a\frac{\partial }{\partial L_a},\quad 
\Xi_3=\sum_{a=1}^3\Delta_a\frac{\partial }{\partial \Delta_a},
$$
and the form of the function $F_5$, the equation $\Omega_0F_5=0$ implies that $\Xi_1F_5=\Xi_2F_5=\Xi_3F_5=0$. 

If we apply the method of characteristics to the equation $\Xi_3F_5=0$, it yields that there exists a function $F_6:\mathbb{R}^5\rightarrow\mathbb{R}$ such that $F(x_0,\ldots,x_4,v_0,\ldots,v_4)=F_6(L_1,L_2,L_3,\pi_2,\pi_3)$, with $\pi_2=\Delta_2\Delta_1^{-1}$ and $\pi_3=\Delta_3\Delta_1^{-1}$.
Moreover, as $F_5$ also satisfies the equation $\Xi_2F_5=\Xi_2F_6=0$, we obtain that $F_6$ only  depends on the variables $\pi_2$, $\pi_3$, $\Gamma_2=\pi_2 L_1-L_2$ and $\Gamma_3=\pi_3L_1-L_3$, i.e. there exists a function $F_7:\mathbb{R}^4\rightarrow \mathbb{R}$ such that $F(x_0,\ldots,x_4,v_0,\ldots,v_4)=F_6(L_1,L_2,L_3,\pi_2,\pi_3)=F_7(\pi_2,\pi_3,\Gamma_2,\Gamma_3)$.

Finally, the conditions $\Xi_1F=\Xi_1F_7=0$ imply that $\Xi_1F_7=\Upsilon_2F_7+L_1\Upsilon_1F_7=0$, where
$$
\Upsilon_1F_7=(\pi_2-\pi_2^2)\frac{\partial F_7}{\partial \pi_2}+(\pi_3-\pi_3^2)\frac{\partial F_7}{\partial \pi_3}-\pi_2\Gamma_2\frac{\partial F_7}{\partial \Gamma_2}-\pi_3\Gamma_3\frac{\partial F_7}{\partial \Gamma_3}=0$$
and
$$
\Upsilon_2F_7=(\Gamma_2+\Gamma_2^2)\frac{\partial F_7}{\partial \Gamma_2}+(\Gamma_3+\Gamma_3^2)\frac{\partial F_7}{\partial \Gamma_3}+\Gamma_2\pi_2\frac{\partial F_7}{\partial \pi_2}+\Gamma_3\pi_3\frac{\partial  F_7}{\partial \pi_3}=0.
$$
The first equation implies that there exists a function $F_8:\mathbb{R}^3\rightarrow \mathbb{R}$ which satisfies that $F(x_0,\ldots,x_4,v_0,\ldots,v_4)=F_7(\pi_2,\pi_3,\Gamma_2,\Gamma_3)=F_8(\Psi_0,\Psi_1,\Psi_2),$ where $\Psi_0=(\pi_2-1){\Gamma_2}^{-1}$, $\Psi_1=(\pi_3-1){\Gamma_3}^{-1}$ and $\Psi_2={\pi_2}{\pi_3}^{-1}(1-\pi_3)(1-\pi_2)^{-1}$.
So, the equation $\Upsilon_2F_7=\Upsilon_2F_8=0$ can be cast into the form
$$
(1-\Psi_0)\frac{\partial F_7}{\partial \Psi_0}+(1-\Psi_1)\frac{\partial F_7}{\partial \Psi_1}-\Psi_2\frac{\Psi_1-\Psi_0}{\Psi_0\Psi_1}\frac{\partial F_7}{\partial \Psi_2}=0,$$
and we obtain that $F$ is an arbitrary function of the functions $\Lambda_1=\frac{1-\Psi_0}{1-\Psi_1}$ and $\Lambda_2=\frac{\Psi_0\Psi_2}{\Psi_1}$. In particular, undoing all the changes of variables performed along this Section, we get that
$$
\Lambda_1(p)=\frac{F_{431}(p)F_{210}(p)}{F_{421}(p)F_{310}(p)}\qquad {\rm and}\qquad \Lambda_2(p)=\frac{F_{431}(p)F_{420}(p)}{F_{421}(p)F_{430}(p)}
$$
are two independent first-integrals for the vector fields of the distribution $\mathcal{D}$. 
\section{Conclusions and outlook}
We have defined and analysed the concepts of superposition rule, time-dependent superposition rule, and free superposition rule for systems of SODEs. Several results concerning the existence of such superposition rules have been proved. Posteriorly, our theoretical achievements have been illustrated by means of the study of a number of SODEs appearing in the Physics and Mathematics literature. Several new SODE Lie systems have been described and a common superposition rule for all of them has been derived. Such a superposition rule has posteriorly been used, with the aid of the theory of quasi-Lie systems, to get a time-dependent superposition rule for second-order Riccati equations.

In the future, we expect to continue the analysis the properties of superposition rules for systems of SODEs as well as to investigate the generalization of the techniques depicted throughout this work to systems of higher-order differential equations. As an application, we hope to apply the results obtained in order to study new systems of second- and higher-order differential equations. In particular, it is specially interesting analysing the higher members of the Riccati hierarchy in other to develop new methods to determine solutions for those PDEs whose B\"acklund transformations are described by members of such a  hierarchy.

\section*{Acknowledgments} 

Partial financial support by research projects E24/1 (DGA), MTM2009-08166-E and MTM\\2009-11154 is acknowledged. 

% You may incorporate your references as follows in your main tex file.
% Using BibTex is not recommended but can be handled.

\end{document}